
\typeout{On Metric DBSCAN with Low Doubling Dimension}


\documentclass{article}
\pdfpagewidth=8.5in
\pdfpageheight=11in
\usepackage{ijcai20}

\usepackage{times}
\usepackage{soul}
\usepackage{url}
\usepackage[hidelinks]{hyperref}
\usepackage[utf8]{inputenc}
\usepackage[small]{caption}
\usepackage{graphicx}
\usepackage{amsmath}
\usepackage{amsthm}
\usepackage{booktabs}
\usepackage{algorithm}
\usepackage{algorithmic}

%
%
%

\usepackage{amssymb}
\usepackage{amsmath}
\usepackage{amsthm}
\usepackage{algorithm}
\usepackage{algorithmic}
\usepackage{subfig}
\usepackage{graphicx}
\usepackage{enumerate}

\urlstyle{same}

\makeatletter
\newcommand{\rmnum}[1]{\romannumeral #1}
\newcommand{\Rmnum}[1]{\expandafter\@slowromancap\romannumeral #1@}
\makeatother



\newtheorem{theorem}{Theorem}

\newtheorem{definition}{Definition}
\newtheorem{lemma}{Lemma}
\newtheorem{proposition}{Proposition}
\newtheorem{remark}{Remark}




\title{On Metric DBSCAN with Low Doubling Dimension}
\author{
Hu Ding$^1$
\and
Fan Yang$^1$
\affiliations
$^1$School of Computer Science and Technology, University of Science and Technology of China 
\emails
huding@ustc.edu.cn 
yang208@mail.ustc.edu.cn,
}
%

 \begin{document}

\maketitle

\begin{abstract}
The density based clustering method {\em Density-Based Spatial Clustering of Applications with Noise (DBSCAN)} is a popular method for outlier recognition and has received tremendous attention from many different areas. A major issue of the original DBSCAN is that the time complexity could be as large as quadratic. Most of existing DBSCAN algorithms focus on developing efficient index structures to speed up the procedure in low-dimensional Euclidean space. However, the research of DBSCAN in high-dimensional Euclidean space or general metric space is still quite limited, to the best of our knowledge. In this paper, we consider the metric DBSCAN problem under the assumption that the inliers (excluding the outliers) have a low doubling dimension. We apply a novel randomized $k$-center clustering idea to reduce the complexity of range query, which is the most time consuming step in the whole DBSCAN procedure. Our proposed algorithms do not need to build any complicated data structures and are easy to be implemented in practice. The experimental results show that our algorithms can significantly outperform the existing DBSCAN algorithms in terms of running time. 
\end{abstract}

\section{Introduction}
\label{sec-intro}

Density-based clustering is a fundamental topic in data analysis and has many applications in the areas, such as machine learning, data mining, and computer vision~\cite{tan2006introduction}. Roughly speaking, the problem of density-based clustering aims to partition given data set into clusters where each cluster is a dense region in the space. The remaining data located in sparse regions are recognized as ``outliers''. Note that the given data set can be a set of points in a Euclidean space or any abstract metric space. {\em DBSCAN (Density-Based Spatial Clustering of Applications with Noise )}~\cite{ester1996density} is one of the most popular density-based clustering methods and has been implemented for solving many real-world problems. DBSCAN uses two parameters, ``$MinPts\geq 1$'' and ``$\epsilon>0$'', to define the clusters ({\em i.e.,} the dense regions): a point $p$ is a ``core point'' if it has at least $MinPts$ neighbors within distance $\epsilon$; a cluster is formed by a set of ``connected'' core points and some non-core points located in the boundary (which are named ``border points''). We will provide the formal definition in Section~\ref{sec-dbscandef}.

A bottleneck of the original DBSCAN algorithm is that it needs to perform a range query for each data item, {\em i.e.,} computing the number of neighbors within the distance $\epsilon$, and the overall time complexity can be as large as $O(n^2\beta)$ in the worst case, where $n$ is the number of data items and $\beta$ indicates the complexity for computing the distance between two items. For example, if the given data is a set of points in $\mathbb{R}^D$, we have $\beta=O(D)$.
When $n$ or $\beta$ is large, the procedure of range query could make DBSCAN running very slowly. 

Most existing DBSCAN algorithms focus on the case in low-dimensional Euclidean space. To speed up the step of range query, a natural idea is using some efficient index structures, such as $R^*$-tree~\cite{DBLP:conf/sigmod/BeckmannKSS90},  though the overall complexity in the worst case is still $O(n^2)$ ($\beta=O(1)$ for low-dimensional Euclidean space).  We refer the reader to the recent articles that systematically discussed this issue~\cite{gan2015dbscan,schubert2017dbscan}. 

Using the novel techniques from computational geometry, the running time of DBSCAN in $\mathbb{R}^2$ has been improved from $O(n^2)$ to $O(n\log n)$ by~\cite{DBLP:conf/isaac/BergGR17,gunawan2013faster}. For the case in general $D$-dimensional Euclidean space, \cite{DBLP:journals/ijcga/ChenSX05} and \cite{gan2015dbscan} respectively provided the algorithms achieving sub-quadratic running times, where their complexities are both in the form of $O(n^{f(D)}\cdot D)$ with $f(D)$ being some function satisfying $\lim_{D\rightarrow\infty}f(D)=2$. Namely, when the dimensionality $D$ is high, their algorithms cannot gain a significant improvement over the straightforward implementation that has the complexity $O(n^2D)$. Recently, \cite{DBLP:conf/icml/JangJ19} proposed a sampling based method, called {\em DBSCAN++}, to compute an approximate solution for DBSCAN; but their sample size $m\approx n$ when the dimensionality $D$ is large (so there is no significant difference in terms of the time complexity if running the DBSCAN algorithm on the sample). 

To speed up DBSCAN in practice, a number of approximate and distributed DBSCAN algorithms have been proposed, such as \cite{gan2015dbscan,yang2019dbscan,lulli2016ng,song2018rp,DBLP:conf/icml/JangJ19}. 
To the best of our knowledge, most of these algorithms only consider the instances in low-dimensional Euclidean space (rather than high-dimensional Euclidean space or abstract metric space), except \cite{lulli2016ng,yang2019dbscan}. \citeauthor{lulli2016ng} presented an approximate, distributed algorithm for DBSCAN, as long as the distance function $d(\cdot, \cdot)$ is symmetric, that is, $d(x, y)=d(y, x)$ for any two points $x$ and $y$. Very recently, \citeauthor{yang2019dbscan} showed an exact, distributed algorithm for DBSCAN in any abstract metric space; however, their method mainly focuses on how to ensure the load balancing and cut down the communication cost for distributed systems, rather than reducing the computational complexity of DBSCAN (actually, it directly uses the original DBSCAN algorithm of \citeauthor{ester1996density} in each local machine).

\subsection{Our Main Results}
\label{sec-our}
In this paper, we consider developing efficient algorithm for computing the exact solution of DBSCAN. 
As mentioned by \citeauthor{yang2019dbscan}, a wide range of real-world data cannot be represented  in low-dimensional Euclidean space ({\em e.g.,} textual and image data can only be embedded into high-dimensional Euclidean space).
Moreover, as mentioned in ~\cite{schubert2017dbscan}, the original DBSCAN was designed for general metrics, as long as the distance function of data items can be well defined. Thus it motivates us to consider the problem of DBSCAN in high-dimensional Euclidean space and general metric space. 

We assume that the given data has a low ``doubling dimension'', which is widely used for measuring the intrinsic dimensions of datasets~\cite{talwar2004bypassing} (we provide the formal definition in Section~\ref{sec-dmdef}).  
The rationale behind this assumption is that many real-world datasets manifest low intrinsic dimensions~\cite{belkin2003problems}. 
For example, image sets usually can be represented in low dimensional manifold though the Euclidean dimension of the image vectors can be very high. We also note that it might be too strict to assume that the whole data set has a low doubling dimension, especially when it contains outliers. To make the assumption more general and capture a broader range of cases in practice, we only assume that the set of inliers has a constant doubling dimension while the outliers can scatter arbitrarily in the space. The assumption is formally stated in Definition~\ref{def-assumption}. 
%
We focus on the following key question: 

\vspace{0.05in}
{\em Is there any efficient algorithm being able to reduce the complexity of range query for DBSCAN, under such ``low doubling dimension assumption''?}
\vspace{0.05in}

We are aware of several index structures in doubling metrics, {\em e.g.,}~\cite{karger2002finding,krauthgamer2004navigating,talwar2004bypassing}. However, these methods cannot handle the case with outliers. Moreover, they usually need to build very complicated data structures ({\em e.g.,} hierarchically well-separated tree) that are not quite efficient in practice. 

We observe that the well-known $k$-center clustering procedure can be incorporated into the DBSCAN algorithm to reduce the complexity of the range query procedure in doubling metric. However, we cannot directly apply the ordinary $k$-center clustering method ({\em e.g.,} \cite{gonzalez1985clustering}) since the outliers may not satisfy the low doubling dimension condition. Instead, we show that a randomized $k$-center clustering algorithm proposed by \cite{DYW} can efficiently remedy this issue, though we still need to develop some new ideas to apply their algorithm to solve the problem of DBSCAN.

The rest of the paper is organized as follows. In Section~\ref{sec-pre}, we show the formal definitions of doubling dimension and DBSCAN, and briefly introduce the randomized $k$-center clustering algorithm from~\cite{DYW}. In Section~\ref{sec-alg}, we propose and analyze our algorithms for reducing the complexity of range query in detail. Finally, we compare the experimental performances of our algorithms and several well-known baseline DBSCAN algorithms on both synthetic and real datasets.

%


%

\section{Preliminaries}
\label{sec-pre}
Throughout this paper, we use $(X, d)$ to denote the metric space where $d(\cdot, \cdot)$ is the distance function on $X$. Let $|X|=n$. We also assume that it takes $O(\beta)$ time to compute $d(p, q)$ for any $p$, $q\in X$. 
Let $Ball(x, r)$ be the ball centered at point $x\in X$ with radius $r\geq 0$ in the metric space.

\subsection{DBSCAN}
\label{sec-dbscandef}
 
We introduce the formal definition of DBSCAN. 
Given two parameters $\epsilon>0$ and $MinPts\in\mathbb{Z}^+$, DBSCAN divides the points of $X$ into three classes: 

\begin{enumerate}
\item $p$ is a \textbf{core point}, if $|Ball(p, \epsilon)\cap X|\geq MinPts$;
\item $p$ is a \textbf{border point}, if $p$ is not a core point but $p\in Ball(q, \epsilon)$ of some core point $q$;
\item all the other points are \textbf{outliers}.
\end{enumerate}

To define a cluster of DBSCAN, we need the following concept.

\begin{definition}[Density-reachable]
\label{def-reach}
We say a point $p\in X$ is density-reachable from a core point $q$, if there exists a sequence of points $p_1, p_2, \cdots, p_t\in X$ such that:

\begin{itemize}
\item $p_1=q$ and $p_t=p$;
\item $p_1, \cdots, p_{t-1}$ are all core points;
\item $p_{i+1}\in Ball(p_i, \epsilon)$ for each $i=1, 2, \cdots, t-1$.
\end{itemize}
\end{definition}

If one arbitrarily picks a core point $q$, then the corresponding DBSCAN cluster defined by $q$ is
\begin{eqnarray}
\{p\mid p\in X  \text{ and p  is density-reachable from q}\}.\label{for-dbscan1}
\end{eqnarray}
Actually, we can imagine that the set $X$ form a directed graph: any two points $p$ and $p'\in P$ are connected by a directed edge $p\rightarrow p'$, if $p$ is a core point and $p'\in Ball(p, \epsilon)$. From (\ref{for-dbscan1}), we know that the cluster is the maximal subset containing the points who are density-reachable from $q$. The cluster may contain both core and border points. It is easy to know that for any two core point $q$ and $q'$, they define exactly the same cluster if they are density-reachable from each other ({\em i.e.,} there exists a path from $q$ to $q'$ and vice versa). Therefore, a cluster of DBSCAN is uniquely defined by any of its core points. Moreover, a border point could belong to multiple clusters and an outlier cannot belong to any cluster. 
The goal of DBSCAN is to discover these clusters and outliers. 

For convenience, we use $X_{in}$ and $X_{out}$ to denote the sets of inliers (including the core points and border points) and outliers, respectively.

\subsection{Doubling Metrics}
\label{sec-dmdef}


\begin{definition}[Doubling Dimension]
\label{def-dd}
The doubling dimension of a metric $(X,d)$ is the smallest number $\rho>0$, such that for any $p\in X$ and $r\geq 0$, $X\cap Ball(p, 2r)$ is always covered by the union of at most $2^\rho$ balls with radius $r$.
\end{definition}

Roughly speaking, the doubling dimension describes the expansion rate of the metric. 
We have the following property of doubling metrics from~\cite{talwar2004bypassing,krauthgamer2004navigating} that can be proved by recursively applying Definition~\ref{def-dd}.

\begin{proposition}
\label{pro-doubling}
Let $(X, d)$ be a metric with the doubling dimension $\rho>0$. If $Y\subseteq X$ and its aspect ratio is $\alpha =\frac{\max_{y, y'\in Y}d(y, y')}{\min_{y, y'\in Y}d(y, y')}$, then $|Y|\leq 2^{\rho\lceil\log\alpha\rceil}$.
\end{proposition}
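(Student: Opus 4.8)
The plan is to recast the statement as a packing bound and then derive it by iterating Definition~\ref{def-dd}. Write $r_{\max}=\max_{y,y'\in Y}d(y,y')$ and $r_{\min}=\min_{y,y'\in Y}d(y,y')$, so that $\alpha=r_{\max}/r_{\min}$. Two elementary observations drive the argument. First, fixing any point $y_0\in Y$, every other point of $Y$ is within distance $r_{\max}$ of $y_0$, hence $Y\subseteq Ball(y_0,r_{\max})$; this lets me begin from a single ball that already covers all of $Y$. Second, $Y$ is \emph{well separated} in the sense that any two distinct points of $Y$ are at distance at least $r_{\min}$; this is what will let me convert a sufficiently fine covering of $Y$ into an upper bound on $|Y|$ itself.

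The core of the proof is a repeated halving of the covering radius. Reading Definition~\ref{def-dd} with $r$ replaced by $r/2$, any ball of radius $r$ is covered by at most $2^{\rho}$ balls of radius $r/2$. Applying this to $Ball(y_0,r_{\max})$, and then to each ball produced at every subsequent stage, a straightforward induction on the number of rounds $t$ shows that $Ball(y_0,r_{\max})$ — and therefore $Y$ — is covered by at most $(2^{\rho})^{t}=2^{\rho t}$ balls, each of radius $r_{\max}/2^{t}$. This geometric recursion is the ``recursively applying Definition~\ref{def-dd}'' promised in the remark preceding the statement.

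I would then take $t=\lceil\log\alpha\rceil$, for which $2^{t}\ge\alpha$ and hence the final covering radius satisfies $r_{\max}/2^{t}\le r_{\max}/\alpha=r_{\min}$. At this scale the balls are fine enough that each one can hold at most one point of $Y$: two distinct points lying in a common small ball would be forced, via the triangle inequality, to sit within the minimum separation $r_{\min}$ of one another, which is impossible. Since each of the $2^{\rho t}$ covering balls then accounts for at most one element of $Y$, we conclude $|Y|\le 2^{\rho\lceil\log\alpha\rceil}$, as claimed.

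The step I expect to be the delicate one is precisely the last implication, namely certifying that at the chosen scale every ball contains at most one point of $Y$. The triangle inequality only guarantees that two points sharing a ball of radius $r$ are at distance at most $2r$, so the single-point property genuinely requires $2r<r_{\min}$ rather than merely $r\le r_{\min}$; reconciling this threshold with the exact number of halving rounds, and handling the boundary case where $\alpha$ is a power of two, is what pins down the ceiling $\lceil\log\alpha\rceil$ in the exponent. Everything else is the routine bookkeeping of the covering recursion.
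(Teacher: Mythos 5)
Your covering recursion is exactly the argument the paper has in mind (the paper gives no self-contained proof; it only cites \cite{talwar2004bypassing,krauthgamer2004navigating} and the hint of ``recursively applying Definition~\ref{def-dd}''), but the endgame of your proof genuinely fails, and you cannot ``reconcile'' it the way your last paragraph hopes. After $t=\lceil\log\alpha\rceil$ halvings the covering balls have radius at most $r_{\max}/2^{t}\leq r_{\min}$, and a ball of radius $r_{\min}$ can contain \emph{several} points of $Y$: the triangle inequality only caps the distance between two points sharing such a ball at $2r_{\min}$, which is perfectly compatible with the separation $\geq r_{\min}$ (on the real line, a closed ball of radius $r_{\min}$ holds three points at pairwise separation $\geq r_{\min}$). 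The one-point-per-ball step requires the final radius to be strictly below $r_{\min}/2$, i.e.\ $2^{t}>2\alpha$, which forces $t\geq\lfloor\log\alpha\rfloor+2$ --- generally two more rounds than you allot --- and the conclusion it yields is $|Y|\leq 2^{\rho(\lceil\log\alpha\rceil+2)}$, not $2^{\rho\lceil\log\alpha\rceil}$.

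Moreover, this is not a gap that cleverer bookkeeping can close, because the literal constant in Proposition~\ref{pro-doubling} is unattainable: take $X=\mathbb{R}$ with the usual metric, whose doubling dimension under Definition~\ref{def-dd} is $\rho=1$ (an interval of length $4r$ is covered by two intervals of length $2r$, and never by one), and $Y=\{0,1,2\}$. Then $\alpha=2$ and the claimed bound is $2^{1\cdot\lceil\log 2\rceil}=2$, yet $|Y|=3$. The proposition is folklore recorded with loose constants; what the halving argument honestly proves is $|Y|\leq 2^{\rho\lceil\log\alpha\rceil+O(\rho)}$, and that weaker form is all the paper actually needs, since it treats $\rho$ as a constant and only invokes bounds of the form $O\big(2^{\rho\lceil\log\alpha\rceil}\big)$ in Lemmas~\ref{lem-ball} and~\ref{lem-ball2}. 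So your approach is the right one, but you should either prove the corrected bound with exponent $\rho(\lceil\log\alpha\rceil+2)$ (also fixing the minor strictness point that two points at distance exactly $r_{\min}$ are allowed, so the contradiction needs a strict inequality), or flag that the statement itself requires this adjustment.
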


For our DBSCAN problem, we adopt the following assumption from~\cite{DYW}.


\begin{definition}[Low Doubling Dimension Assumption]
\label{def-assumption}
Given an instance $(X, \epsilon, MinPts)$ of DBSCAN, we assume that the metric $(X_{in}, d)$, {\em i.e.,} the metric formed by the set of core points and border points, has a constant doubling dimension $\rho>0$. The set $X_{out}$ of outliers can scatter arbitrarily in the space.
\end{definition}

\subsection{The Randomized Gonzalez's Algorithm}
\label{sec-rangon}
{\em $k$-center clustering} is one of the most fundamental clustering problems~\cite{gonzalez1985clustering}. Given a metric $(X, d)$ with $|X|=n$, the problem of $k$-center clustering is to find $k$ balls to cover the whole $X$ and minimize the maximum radius. 
For the sake of completeness, let us briefly introduce the algorithm of \cite{gonzalez1985clustering} for $k$-center clustering first. Initially, it arbitrarily selects a point from $X$, and iteratively selects the following $k-1$ points, where each $j$-th step ($2\leq j\leq k$) chooses the point having the largest minimum distance to the already selected $j-1$ points; finally, each point of $X$ is assigned to its nearest neighbor of these selected $k$ points. It can be proved that this greedy strategy yields a $2$-approximation of $k$-center clustering ({\em i.e.,} the maximum radius of the obtained $k$ balls is at most twice as large as the optimal radius). 

\cite{DYW} presented a randomized version of the Gonzalez's algorithm for solving $k$-center clustering with outliers. Let $z\geq 1$ be the pre-specified number of outliers, and the problem of $k$-center with outliers is to find $k$ balls to cover $n-z$ points of $X$ and minimize the maximum radius. This problem is much more challenging than the ordinary $k$-center clustering, since we do not know which points are the outliers in advance and there are an exponentially large number ${n\choose z}$ of  different possible cases. Note that other algorithms for $k$-center clustering with outliers, such as \cite{charikar2001algorithms,DBLP:conf/icalp/ChakrabartyGK16}, take at least quadratic time complexity. The key idea of \cite{DYW} is to replace each step of Gonzalez's algorithm, choosing the farthest point to the set of already selected points, by taking a random sample from the farthest $(1+\delta)z$ points, where $\delta>0$ is a small parameter; after $O(k)$ steps, with constant probability, the algorithm yields a set of $O(\frac{k}{\delta})$ balls covering at least $n-(1+\delta)z$ points of $X$ and the resulting radius is at most twice as large as the optimal radius.  For example, if we set $\delta=1$, the algorithm will yield $O(k)$ balls covering at least $n-2z$ points. The formal result is presented in Theorem~\ref{the-bi}. 
We omit the detailed proof from \cite{DYW}. 

\begin{algorithm}[tb]
   \caption{The Randomized Gonzalez's algorithm}
   \label{alg-bi}
\begin{algorithmic}
  \STATE {\bfseries Input:} An instance $(X, d)$ of $k$-center clustering with $z$ outliers, and $|X|=n$; the parameters $\delta>0$, $\eta\in (0,1)$, and $t\in \mathbb{Z}^+$.
   \STATE
\begin{enumerate}
\item Let $\gamma=z/n$ and initialize a set $E=\emptyset$. 

\item Initially, $j=1$; randomly select $\frac{1}{1-\gamma}\log\frac{1}{\eta}$ points from $X$ and add them to $E$.
\item Run the following steps until $j= t$:
\begin{enumerate}
\item $j=j+1$ and let $Q_j$ be the farthest $(1+\delta)z$ points of $X$ to $E$ (for each point $p\in X$, its distance to $E$ is $\min_{q\in E}d(p, q)$). 
\item Randomly select $\frac{1+\delta}{\delta}\log\frac{1}{\eta}$ points from $Q_j$ and add them to $E$.
\end{enumerate}
\end{enumerate}
  \STATE {\bfseries Output} $E$.
\end{algorithmic}
\end{algorithm}

\begin{theorem}
\label{the-bi}
Let $r_{opt}$ be the optimal radius of the instance $(X, d)$ of $k$-center clustering with $z$ outliers. If we set $t=\frac{k+\sqrt{k}}{1-\eta}$ in  Algorithm~\ref{alg-bi}, with probability at least $(1-\eta)(1-e^{-\frac{1-\eta}{4}})$, the set of balls
\begin{eqnarray}
\cup_{c\in E}Ball\big(c, 2r_{opt}\big)
\end{eqnarray}
cover at least $n-(1+\delta)z$ points of $X$.
\end{theorem}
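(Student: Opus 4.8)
The plan is to show that, with the stated probability, the sampling procedure ``marks'' every one of the $k$ optimal clusters, where I call a cluster marked once $E$ contains one of its points. First I would fix an optimal solution: $k$ balls of radius $r_{opt}$ with centers $o_1,\dots,o_k$ covering all but $z$ points of $X$, writing $O_i$ for the point set of the $i$-th ball and calling the remaining $z$ points the optimal outliers. The crucial elementary ingredient is a triangle-inequality covering fact: if $E$ ever contains a point $c\in O_i$, then for every $p\in O_i$ we have $d(c,p)\le d(c,o_i)+d(o_i,p)\le 2r_{opt}$, so $O_i\subseteq Ball(c,2r_{opt})$. Hence once all $k$ clusters are marked, every inlier is covered and the uncovered set is contained in the $z$ optimal outliers, giving coverage at least $n-z\ge n-(1+\delta)z$. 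It therefore suffices to prove that $E$ hits all $k$ clusters within the $t$ iterations.

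The second ingredient is a per-iteration progress guarantee. For the initialization in Step~2, a uniformly random point of $X$ is an inlier (hence in some $O_i$) with probability $\frac{n-z}{n}=1-\gamma$, so all $\frac{1}{1-\gamma}\log\frac1\eta$ initial samples miss the inliers with probability at most $\gamma^{\frac{1}{1-\gamma}\log\frac1\eta}\le\eta$, and the first step marks at least one cluster with probability at least $1-\eta$. For a later Step~3(b) iteration I would argue conditionally: if the current balls do not yet cover $n-(1+\delta)z$ points, then more than $(1+\delta)z$ points lie outside $\cup_{c\in E}Ball(c,2r_{opt})$, and since $Q_j$ consists of the farthest $(1+\delta)z$ points to $E$, all of $Q_j$ is uncovered. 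At most $z$ of these are optimal outliers, so at least $\delta z$ are inliers lying in still-unmarked clusters; calling these ``good'' points, their fraction in $Q_j$ is at least $\frac{\delta}{1+\delta}$. Sampling $\frac{1+\delta}{\delta}\log\frac1\eta$ points from $Q_j$ then misses every good point with probability at most $\bigl(1-\frac{\delta}{1+\delta}\bigr)^{\frac{1+\delta}{\delta}\log\frac1\eta}\le\eta$, so each live iteration marks a fresh cluster with probability at least $1-\eta$.

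The final step is a concentration argument over the $t=\frac{k+\sqrt k}{1-\eta}$ iterations. I would couple the algorithm with independent Bernoulli$(1-\eta)$ trials $Y_1,\dots,Y_t$ so that, while the target coverage has not yet been reached, a success (marking a new cluster) occurs whenever $Y_j=1$; since each success marks at least one new cluster, accumulating $k$ successes forces all clusters to be marked. The key observation is that if the algorithm fails, then every iteration was a genuine live trial, so the number of marked clusters dominates $\sum_j Y_j$, and the failure event is contained in $\{\sum_j Y_j<k\}$. Because $\mathbb{E}\bigl[\sum_j Y_j\bigr]=t(1-\eta)=k+\sqrt k$, the additive $\sqrt k$ slack is exactly what a Chernoff lower-tail bound needs to control $P(\sum_j Y_j<k)$; combining the first-step success probability $1-\eta$ with this tail estimate yields the claimed $(1-\eta)\bigl(1-e^{-(1-\eta)/4}\bigr)$, following the constants of~\cite{DYW}.

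The main obstacle I anticipate is the dependence between iterations: the ``good fraction at least $\frac{\delta}{1+\delta}$'' guarantee holds only conditioned on the algorithm not having already terminated and on which clusters remain unmarked, so the per-step successes are neither independent nor identically distributed. Making the coupling rigorous is the delicate part, and the natural device is a stopping-time / martingale formulation in which each live iteration is declared a trial of success probability at least $1-\eta$ uniformly over the past history; once this stochastic domination is in place, the choice of $t$ with the extra $\sqrt k$ term is precisely what lets the Chernoff slack absorb the $\sqrt k$ deviation and deliver a constant success probability.
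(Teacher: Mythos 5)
The paper itself gives no proof of this theorem: it states the result and explicitly defers to \cite{DYW} (``We omit the detailed proof from \cite{DYW}''), so the only meaningful comparison is with that cited argument. Your reconstruction --- marking the $k$ optimal clusters via the triangle-inequality covering fact, showing each ``live'' iteration succeeds with probability at least $1-\eta$ because the farthest $(1+\delta)z$ points contain a $\frac{\delta}{1+\delta}$ fraction of inliers from still-unmarked clusters, and then a stochastic-domination/Chernoff argument over the $t=\frac{k+\sqrt{k}}{1-\eta}$ trials in which the additive $\sqrt{k}$ slack absorbs the lower-tail deviation --- is correct and is essentially the same proof as in \cite{DYW}, including your correct handling of the inter-iteration dependence via conditioning on the history.
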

If $\frac{1}{\eta}$ and $\frac{1}{1-\gamma}$ are constant numbers, the number of balls ({\em i.e.,} $|E|$) is $O(\frac{k}{\delta})$ and the success probability is constant. In each round of Step~3, there are $\frac{1+\delta}{\delta}\log\frac{1}{\eta}=O(\frac{1}{\delta})$ new points added to $E$, thus it takes $O(\frac{1}{\delta}n\beta)$ time to update the distances from the points of $X$ to $E$; to select the set $Q_j$, we can apply the linear time selection algorithm~\cite{blum1973time}. Overall, the running time of Algorithm~\ref{alg-bi} is $O(\frac{k}{\delta}n\beta)$. If the given instance is in $\mathbb{R}^D$, the running time will be $O(\frac{k}{\delta}n D)$.



\section{Our Algorithms and Theoretical Analysis}
\label{sec-alg}

In this section, we present two efficient algorithms for solving DBSCAN under the assumption of Definition~\ref{def-assumption}.

\subsection{The First DBSCAN Algorithm}
\label{sec-dbscan1}

Our first DBSCAN algorithm (Algorithm~\ref{alg-alg1}) contains two parts. To better understand our algorithm, we briefly introduce the high-level idea below.  For convenience, we use $d(U, V)$ to denote the minimum distance between  two sets $U$ and $V\subset X$, {\em i.e.,} $\min\{d(u, v)\mid u\in U, v\in V\}$. 

\vspace{0.03in}
\textbf{Part (\rmnum{1}).} First, we run Algorithm~\ref{alg-bi} to conduct a coarse partition on the given set $X$.  We view $X$ as an instance of $k$-center clustering with $z$ outliers where $z=|X_{out}|$ (recall $X_{out}$ is the set of outliers defined in Section~\ref{sec-dbscandef}). However, we cannot directly run Algorithm~\ref{alg-bi} since the values of $t$ and $z$ are not given. Actually, we can avoid to set the value of $t$ via a slight modification on Algorithm~\ref{alg-bi}; we just need to iteratively run Step 3 until $d(Q_j, E)\leq r$, where $r$ is a parameter that will be discussed in our experiments. For the parameter $z$, we cannot obtain its exact value before running DBSCAN; so we only assume that an upper bound $\tilde{z}$ of $z$ is available in our experiments. In practice, the number $z$ is much smaller than $n$. In each round of Step 3 of Algorithm~\ref{alg-bi}, we update the distances from $X$ to $E$. As a by-product,  we can store the following informations after running Algorithm~\ref{alg-bi}:
\begin{itemize}
\item the pairwise distances of $E$: $\{d(c, c')\mid c, c'\in E\}$;
\item for each $p\in X$, denote by $c_p$ its nearest neighbor in $E$.
\end{itemize}


%
%
If we simply set $\delta=1$ in Algorithm~\ref{alg-bi}, Theorem~\ref{the-bi} implies that at least $n-2\tilde{z}$ points of $X$ are covered by the balls $\cup_{c\in E}Ball\big(c, r\big)$. We denote the set of points outside the balls as $X_{\tilde{z}}$, and obviously $|X_{\tilde{z}}|$ is no larger than $2\tilde{z}$ by Theorem~\ref{the-bi}. 
\vspace{0.03in}

\textbf{Part (\rmnum{2}).} For the second part, we check each point $p\in X$ and determine its label to be ``core point'', ``border point'', or ``outlier''. According to the formulation of DBSCAN, we need to compute the size $\big|X\cap Ball(p, \epsilon)\big|$. 
In general, this procedure will take $O(n\beta)$ time and the whole running time will be $O(n^2 \beta)$. To reduce the time complexity, we can take advantage of the informations obtained in Part (\rmnum{1}). Since $|X_{\tilde{z}}|\leq 2\tilde{z}$ and $\tilde{z}$ usually is much smaller than $n$, we focus on the part $X\setminus X_{\tilde{z}}$ containing the majority of the points in $X$. Let $p$ be any point in $X\setminus X_{\tilde{z}}$ and $c_p$ be its nearest neighbor in $E$. Let 
\begin{eqnarray}
A_p=\{c\mid c\in E, d(c, c_p)\leq 2r+\epsilon\},
\end{eqnarray}
and we can quickly obtain the set $A_p$ since the pairwise distances of $E$ are stored in Part (\rmnum{1}). Lemma~\ref{lem-neighbor} guarantees that we only need to check the local region, the balls $\bigcup_{c\in A_p}  Ball(c, r )$ and $X_{\tilde{z}}$, instead of the whole $X$, for computing the size $\big|X\cap Ball(p, \epsilon)\big|$; further, Lemma~\ref{lem-ball} shows that the size of $A_p$ is bounded. See Figure~\ref{fig-alg} for an illustration.


\begin{algorithm}[tb]
   \caption{\sc{Metric DBSCAN Algorithm}}
   \label{alg-alg1}
\begin{algorithmic}
  \STATE {\bfseries Input:} An instance $(X, d)$ of DBSCAN, and the parameters $\epsilon, r>0$, $MinPts, \tilde{z}\in\mathbb{Z}^+$.
   \STATE
   \begin{enumerate}
   \item Run Algorithm~\ref{alg-bi} with setting $\delta=1$, and terminate the loop of Step 3 when $d(Q_j, E)\leq r$. 
   \begin{enumerate}
   
   \item Store the set $\mathcal{D}_E=\{d(c, c')\mid c, c'\in E\}$.
   \item For each $p\in X$, denote by $c_p$ its nearest neighbor in $E$.
   \item If the instance is in Euclidean space: for each $c\in E$ we build a $R^*$-tree for the points inside $Ball(c, r )$ (if a point $p$ is covered by multiple balls, we assign it to the ball of the center $c_p$).    
      \end{enumerate}
   \item For each $p\in X$, check whether it is a core point:
   \begin{enumerate}
   \item if $p\in X_{\tilde{z}}$, directly compute the set $X\cap Ball(p, \epsilon)$ by scanning $X$;
   \item else, obtain the set $A_p=\{c\mid c\in E, d(c, c_p)\leq 2r+\epsilon\}$ from $\mathcal{D}_E$, and compute the set $X\cap Ball(p, \epsilon)$ by checking the points in $\Big(\bigcup_{c\in A_p}\big(X\cap Ball(c, r )\big)\Big)\bigcup X_{\tilde{z}}$ (inside each $Ball(c, r )$, we use the $R^*$-tree built in Step 1(c) if the instance is in Euclidean space). 
      \end{enumerate}
      \item Join the core points into clusters by running the standard DBSCAN procedure~\cite{schubert2017dbscan}. 
   \end{enumerate}
%
\end{algorithmic}
\end{algorithm}

\newcounter{sd}
\begin{figure*}[]
   \centering
  \includegraphics[height=1.3in]{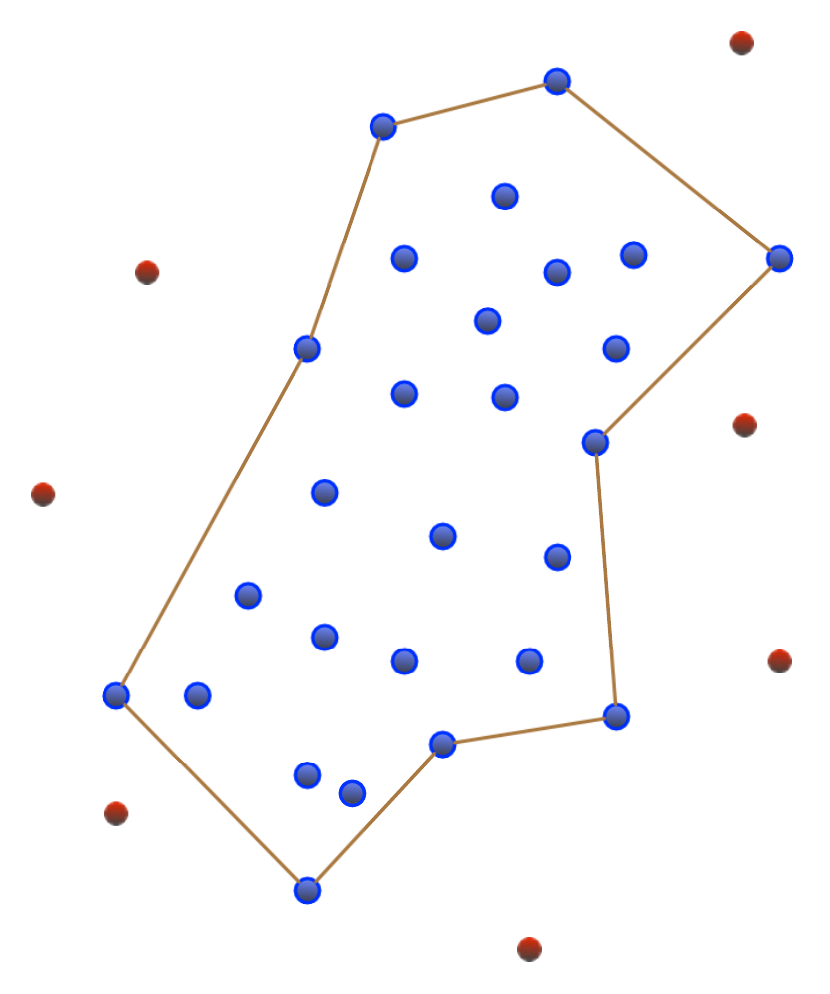}
  \hspace{0.8in}
    \includegraphics[height=1.3in]{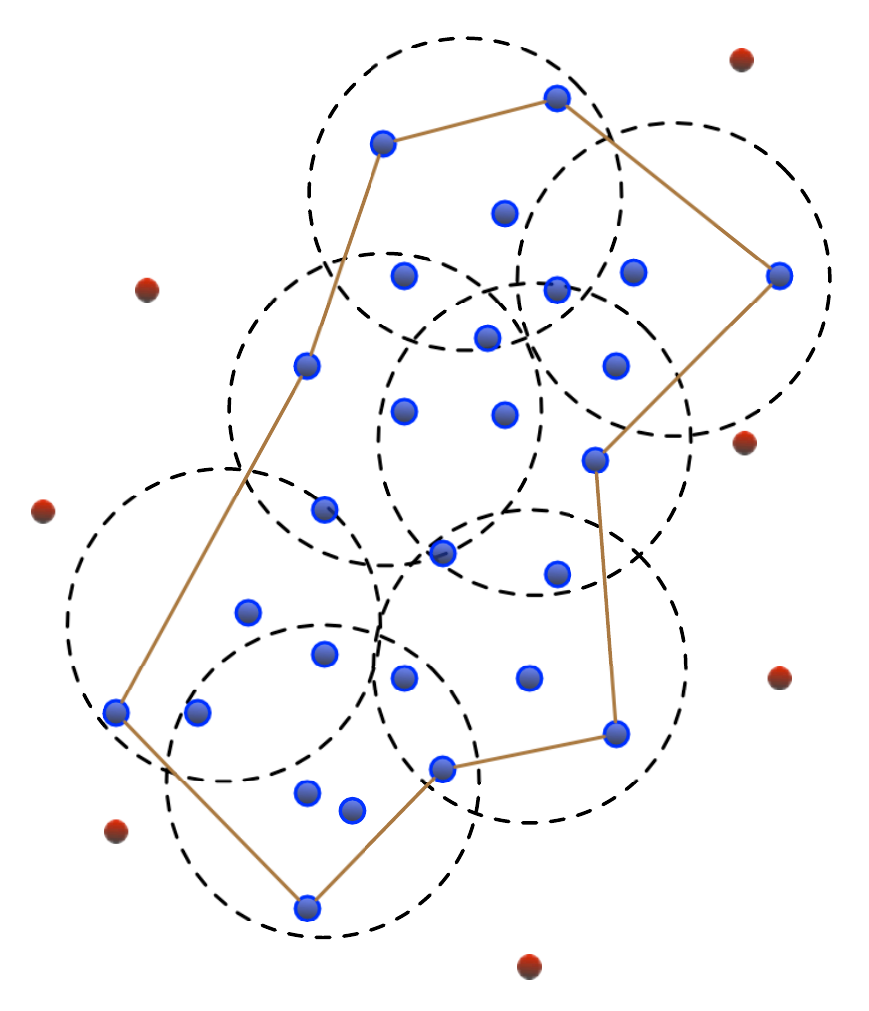} 
     \hspace{0.8in}
      \includegraphics[height=1.3in]{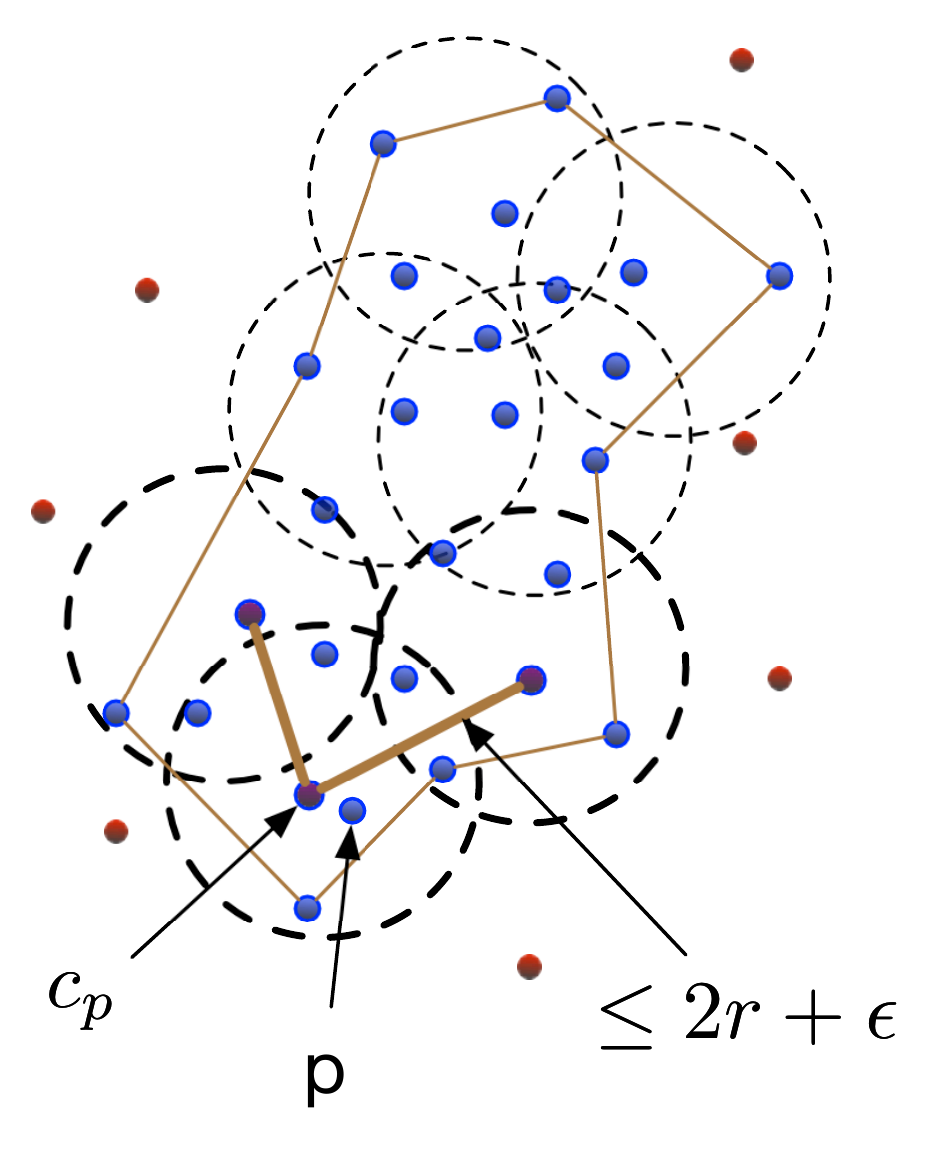}
      \centerline{ \hspace{-1in}\stepcounter{sd} \hfill (\alph{sd}) \hfill \stepcounter{sd} (\alph{sd}) \hfill \stepcounter{sd} (\alph{sd}) \hspace{1.2in}}
      \caption{(a) indicates an instance of DBSCAN, where the blue points are inliers (including the core points and border points) and the red points are outliers; (b) shows the balls obtained in Algorithm~\ref{alg-bi}; (c) shows an example of computing the set $X\cap Ball(p, \epsilon)$ for a point $p$, where we just need to check the two neighbor balls of $c_p$.}
  \label{fig-alg}
\end{figure*}


%
%

\begin{lemma}
\label{lem-neighbor}
If $p\in X\setminus X_{\tilde{z}}$, we have $X\cap Ball(p, \epsilon)\subset \Big(\bigcup_{c\in A_p}\big(X\cap Ball(c, r )\big)\Big)\bigcup X_{\tilde{z}}$.
\end{lemma}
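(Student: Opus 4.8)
The plan is to establish the set inclusion pointwise: I would fix an arbitrary $q \in X \cap Ball(p, \epsilon)$ and argue that $q$ must belong to the right-hand side, i.e.\ either $q \in X_{\tilde{z}}$ or $q$ lies in $Ball(c, r)$ for some $c \in A_p$. Since the right-hand side already contains $X_{\tilde{z}}$ outright, the only interesting case is $q \notin X_{\tilde{z}}$, and there the entire argument reduces to a single triangle-inequality estimate. So the real content is just to collect the correct bounds on $d(p, c_p)$, $d(q, c_q)$, and $d(p, q)$ and chain them together.

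First I would isolate the one place where the hypothesis $p \in X \setminus X_{\tilde{z}}$ is used. By its definition in Part~(\rmnum{1}), $X_{\tilde{z}}$ is exactly the set of points of $X$ left uncovered by $\bigcup_{c \in E} Ball(c, r)$; hence $p \notin X_{\tilde{z}}$ means $p$ is covered by at least one such ball, and since $c_p$ is the nearest neighbor of $p$ in $E$, we get $d(p, c_p) \leq r$. Now fix $q \in X \cap Ball(p, \epsilon)$, so that $d(p, q) \leq \epsilon$. If $q \in X_{\tilde{z}}$ we are immediately done, so assume $q \notin X_{\tilde{z}}$; the same covering argument applied to $q$ shows that its nearest center $c_q \in E$ satisfies $d(q, c_q) \leq r$, i.e.\ $q \in Ball(c_q, r)$.

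It then remains only to check that this center $c_q$ qualifies, i.e.\ that $c_q \in A_p$, which by the definition $A_p = \{c \in E \mid d(c, c_p) \leq 2r + \epsilon\}$ amounts to bounding $d(c_q, c_p)$. This is the crux, but it is a routine chaining of the three estimates already in hand:
\[
d(c_q, c_p) \leq d(c_q, q) + d(q, p) + d(p, c_p) \leq r + \epsilon + r = 2r + \epsilon .
\]
Hence $c_q \in A_p$ and $q \in Ball(c_q, r) \subseteq \bigcup_{c \in A_p} Ball(c, r)$, which completes the inclusion. I do not anticipate any genuine difficulty here; the only subtlety worth flagging is that the threshold $2r + \epsilon$ defining $A_p$ has been calibrated precisely so that this chain closes, and that the relevant covering radius throughout is $r$ (the termination radius used in Part~(\rmnum{1})), rather than the $2r_{opt}$ appearing in Theorem~\ref{the-bi}.
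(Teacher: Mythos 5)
Your proof is correct and is essentially the paper's own argument: the paper proves the same inclusion by the contrapositive (if $d(c_p,c_q)>2r+\epsilon$ then $d(p,q)\geq d(c_p,c_q)-d(p,c_p)-d(q,c_q)>\epsilon$), while you chain the identical three bounds $d(p,c_p)\leq r$, $d(q,c_q)\leq r$, $d(p,q)\leq\epsilon$ in the direct direction to conclude $c_q\in A_p$. The substance --- the triangle inequality with exactly these radii and the calibration of the threshold $2r+\epsilon$ --- is the same.
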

\begin{proof}
Let $q$ be any point in $X\setminus X_{\tilde{z}}$. If $d(c_p, c_q)>2r+\epsilon$, {\em i.e.,} $q\in \bigcup_{c\notin A_p}Ball\big(c, r\big)$, by using the triangle inequality, we have
\begin{eqnarray}
d(p, q)&\geq& d(c_p, c_q)-d(p, c_p)-d(q, c_q)\nonumber\\
&>&2r+\epsilon-r-r>\epsilon.
\end{eqnarray}
Therefore, $q\notin X\cap Ball(p, \epsilon)$. That is, 
\begin{eqnarray}
X\cap Ball(p, \epsilon)&\subset& X\setminus \Big(\bigcup_{c\notin A_p}Ball\big(c, r\big)\Big)\nonumber\\
&=&\Big(\bigcup_{c\in A_p}\big(X\cap Ball(c, r )\big)\Big)\bigcup X_{\tilde{z}}.
\end{eqnarray}
So we complete the proof.
\end{proof}

Now, we consider the size of $A_p$. 
Recall the construction process of $E$ in Algorithm~\ref{alg-bi}. 
Initially, Algorithm~\ref{alg-bi} adds $\frac{1}{1-\gamma}\log\frac{1}{\eta}$ points to $E$; in each round of Step 3, it adds $2\log\frac{1}{\eta}$ points to $E$ (since we set $\delta=1$). So we can imagine that $E$ consists of multiple ``batches'' where each batch contains $\leq\max\{\frac{1}{1-\gamma}\log\frac{1}{\eta},2\log\frac{1}{\eta}\}$ points. Also, since we terminate Step 3 when $d(Q_j, E)\leq r$, any two points from different batches should have distance at least $r$. We  consider the batches having non-empty intersection with $A_p$. For ease of presentation, we denote these batches as $B_1, B_2, \cdots, B_m$. Further,  we label each batch $B_j$ by two colors for $1\leq j\leq m$: 
\begin{itemize}
\item ``red'' if $B_j\cap A_p\cap X_{in}\neq \emptyset$;
\item ``blue'' otherwise.
\end{itemize}
Recall $X_{in}$ is the set of core points and border points defined in Section~\ref{sec-dbscandef}.  Without loss of generality, we assume that the batches $\{B_j\mid 1\leq j\leq m'\}$ are red, and the batches $\{B_j\mid m'+1\leq j\leq m\}$ are blue. 
To bound the size of $A_p$, we divide it to two parts $A_p\setminus X_{in}$ and $A_p\cap X_{in}$. It is easy to know that $A_p\setminus X_{in}\subset X_{out}$, {\em i.e.,}
\begin{eqnarray}
|A_p\setminus X_{in}|\leq |X_{out}|\leq \tilde{z}.
\end{eqnarray}
Also, $A_p\cap X_{in}$ belongs to the union of the red batches, and therefore
\begin{eqnarray}
|A_p\cap X_{in}|\leq m'\times(\max\{2, \frac{1}{1-\gamma}\}\cdot\log \frac{1}{\eta})
\end{eqnarray}
So we focus on the value of $m'$ below. 

\begin{lemma}
\label{lem-ball}
 The number of red batches, $m'$, is at most $2^{\rho\lceil\log\alpha\rceil}$, where $\alpha=4+2\frac{\epsilon}{r}$. That is, $|A_p\cap X_{in}|\leq 2^{\rho\lceil\log\alpha\rceil}\times(\max\{2, \frac{1}{1-\gamma}\}\cdot\log \frac{1}{\eta})$. For simplicity, if we assume $\frac{1}{\eta}$ and $\frac{1}{1-\gamma}$ are constant numbers in Algorithm~\ref{alg-bi}, then 
 $$|A_p\cap X_{in}|\leq O(2^{\rho\lceil\log\alpha\rceil}).$$
  \end{lemma}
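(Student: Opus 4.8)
The plan is to pick one inlier representative from each red batch, show that these representatives form a low-aspect-ratio subset of the doubling metric $(X_{in}, d)$, and then apply Proposition~\ref{pro-doubling} to bound their number.

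Concretely, for each red batch $B_j$ ($1\le j\le m'$) the coloring rule guarantees $B_j\cap A_p\cap X_{in}\neq\emptyset$, so I would fix an arbitrary representative $y_j\in B_j\cap A_p\cap X_{in}$ and set $Y=\{y_1,\dots,y_{m'}\}$. Since the batches are generated in distinct rounds of Step~3 of Algorithm~\ref{alg-bi} and are therefore pairwise disjoint, the $y_j$ are distinct, whence $|Y|=m'$; and because each $y_j\in X_{in}$ we have $Y\subseteq X_{in}$, the metric that carries the constant doubling dimension $\rho$ under Definition~\ref{def-assumption}. It is precisely the ``red'' requirement that forces the representatives into $X_{in}$---this is the crux, since only $X_{in}$ (not the arbitrarily scattered outliers) enjoys the doubling property, so the blue batches and the portion $A_p\setminus X_{in}$ must be handled separately, as already done in the discussion preceding the lemma.

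Next I would bound the aspect ratio of $Y$. For the diameter, every $y_j$ lies in $A_p$ and hence satisfies $d(y_j,c_p)\le 2r+\epsilon$; two applications of the triangle inequality through $c_p$ then give $d(y_i,y_j)\le 4r+2\epsilon$ for all $i,j$. For the separation, distinct $y_i,y_j$ come from different batches, and by the batch-separation property established before the lemma (which follows from terminating Step~3 only when $d(Q_j,E)\le r$) any two points from different batches are at distance at least $r$; thus $d(y_i,y_j)\ge r$. Hence the aspect ratio of $Y$ is at most $\frac{4r+2\epsilon}{r}=4+2\frac{\epsilon}{r}=\alpha$. Applying Proposition~\ref{pro-doubling} to $Y\subseteq X_{in}$ yields $m'=|Y|\le 2^{\rho\lceil\log\alpha\rceil}$, and substituting this into the bound $|A_p\cap X_{in}|\le m'\cdot(\max\{2,\frac{1}{1-\gamma}\}\log\frac{1}{\eta})$ gives the claimed estimate. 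I do not anticipate a genuine obstacle here: the argument is a routine pairing of the triangle inequality with the packing bound of Proposition~\ref{pro-doubling}, and the only point demanding care is to select the representatives inside $X_{in}$ so that the doubling bound is actually applicable.
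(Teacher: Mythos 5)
Your proposal is correct and follows essentially the same argument as the paper: picking one inlier representative per red batch (the paper calls this set $H$), bounding its diameter by $4r+2\epsilon$ via the triangle inequality through $c_p$ and its separation by $r$ via the batch-separation property, then applying Proposition~\ref{pro-doubling} to $H\subset X_{in}$ and multiplying by the per-batch size bound. No gaps; the reasoning matches the paper's proof step for step.
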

\begin{proof}

For each red batch $B_j$, we arbitrarily pick one point, say $c_j$, from $B_j\cap A_p\cap X_{in}$, and let 
\begin{eqnarray}
H=\{c_j\mid 1\leq j\leq m'\}. 
\end{eqnarray}
First, we know $H\subset X_{in}$. Second, because the minimum pairwise distance of $H$ is at least $r$ (since any two points of $H$ come from different batches) and the maximum pairwise distance of $H$
\begin{eqnarray}
\max_{c_j, c_{j'}\in H} d(c_j, c_{j'})&\leq& \max_{c_j, c_{j'}\in H} \big(d(c_j, c_p)+d(c_p, c_{j'})\big)\nonumber\\
&\leq& 2(2r+\epsilon)=4r+2\epsilon,
\end{eqnarray}
the aspect ratio of $H$ is no larger than $\alpha=4+2\frac{\epsilon}{r}$. Note the doubling dimension of $(X_{in}, d)$ is $\rho$ according to Definition~\ref{def-assumption}. Through Proposition~\ref{pro-doubling}, we have $|H|\leq 2^{\rho\lceil\log\alpha\rceil}$.

So the number of red batches $m'=|H|\leq 2^{\rho\lceil\log\alpha\rceil}$; each batch has size $\leq\max\{2, \frac{1}{1-\gamma}\}\cdot\log \frac{1}{\eta}$. 
Overall, we have $|A_p\cap X_{in}|\leq 2^{\rho\lceil\log\alpha\rceil}\times(\max\{2, \frac{1}{1-\gamma}\}\cdot\log \frac{1}{\eta})$. 
\end{proof}

\subsection{An Alternative Approach}
\label{sec-dbscan2}

In this section, we provide a modified version of our first DBSCAN algorithm. In Lemma~\ref{lem-ball}, we cannot directly use Proposition~\ref{pro-doubling} to bound the size of $A_p$, because the points inside the same batch could have pairwise distance less than $r$; therefore, we can only bound the number of red batches. To remedy this issue, we perform the following ``filtration'' operation when adding each batch to $E$ in Algorithm~\ref{alg-bi}.

\textbf{Filtration.} For each batch of $E$, we compute a connection graph: each point of the batch represents a vertex, and any two vertices are connected by an edge if their pairwise distance is smaller than $r$. Then, we compute a maximal independent set (not necessary the maximum independent set) of the graph, and only add this independent set to $E$ instead of the whole batch. See Figure~\ref{fig-alg2} as an illustration.

\begin{figure}[]
   \centering
  \includegraphics[height=1in]{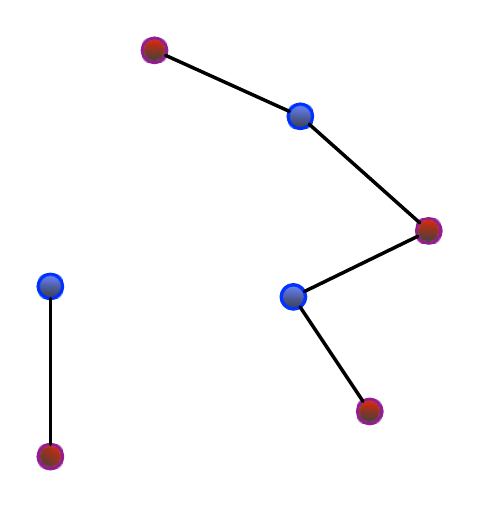}
      \caption{The batch contains $7$ points, and any two points are connected by an edge if their distance is smaller than $r$; we can pick the $4$ red points as the maximal independent set.}
  \label{fig-alg2}
\end{figure}

Obviously, this filtration operation guarantees that the pairwise distance of any two points in $E$ is at least $r$. Since each batch has size $\max\{2, \frac{1}{1-\gamma}\}\cdot\log \frac{1}{\eta}$, it takes $O\big((\max\{2, \frac{1}{1-\gamma}\}\cdot\log \frac{1}{\eta})^2\beta\big)$ time to compute the maximal independent set. Moreover, since the set $E$ has fewer points, we need to modify the result stated in Theorem~\ref{the-bi}. Let $p$ be any point of $X$ having distance no larger than $r$ to $E$ in the original Algorithm~\ref{alg-bi}. After performing the filtration operation, we know $d(p, E)\leq 2r$ due to the triangle inequality. As a consequence, the set $X\setminus X_{\tilde{z}}$ is covered by the balls $\cup_{c\in E}Ball\big(c, 2r\big)$ (instead of $\cup_{c\in E}Ball\big(c, r\big)$). Let 
\begin{eqnarray}
A'_p=\{c\mid c\in E, d(c, c_p)\leq 4r+\epsilon\}.
\end{eqnarray}
The aspect ratio of $A'_p$ is no larger than $\frac{2(4r+\epsilon)}{r}=8+2\frac{\epsilon}{r}$. Using the similar ideas for proving Lemma~\ref{lem-neighbor} and \ref{lem-ball}, we obtain the following results.

\begin{lemma}
\label{lem-neighbor2}
If $p\in X\setminus X_{\tilde{z}}$, we have $X\cap Ball(p, \epsilon)\subset \Big(\bigcup_{c\in A'_p}\big(X\cap Ball(c, 2r )\big)\Big)\bigcup X_{\tilde{z}}$.
\end{lemma}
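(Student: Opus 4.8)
The plan is to mirror the proof of Lemma~\ref{lem-neighbor}, adjusting the constants to account for the enlarged covering radius produced by the filtration step. The crucial structural fact I would invoke first is the one already established in the discussion preceding the lemma: after filtration, every point of $X\setminus X_{\tilde{z}}$ lies within distance $2r$ of $E$, so in particular $d(p,c_p)\le 2r$ for each such point (recall $c_p$ is the nearest neighbor of $p$ in $E$). This replaces the bound $d(p,c_p)\le r$ used in Lemma~\ref{lem-neighbor} and is the only place where the filtration genuinely changes the argument.

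First I would fix an arbitrary $q\in X\setminus X_{\tilde{z}}$ whose center $c_q$ satisfies $c_q\notin A'_p$, i.e.\ $d(c_p,c_q)>4r+\epsilon$. Applying the triangle inequality in the form $d(p,q)\ge d(c_p,c_q)-d(p,c_p)-d(q,c_q)$ and substituting the two covering bounds $d(p,c_p)\le 2r$ and $d(q,c_q)\le 2r$, I would obtain $d(p,q)>(4r+\epsilon)-2r-2r=\epsilon$, which shows $q\notin Ball(p,\epsilon)$. Hence no point of $X\setminus X_{\tilde{z}}$ whose nearest center falls outside $A'_p$ can contribute to $X\cap Ball(p,\epsilon)$.

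Finally I would convert this into the stated containment by set complementation, exactly as in Lemma~\ref{lem-neighbor}: the surviving points of $X\setminus X_{\tilde{z}}$ all lie in $\bigcup_{c\in A'_p}\big(X\cap Ball(c,2r)\big)$, and the remaining points of $X\cap Ball(p,\epsilon)$ must therefore belong to $X_{\tilde{z}}$, yielding $X\cap Ball(p,\epsilon)\subset\big(\bigcup_{c\in A'_p}(X\cap Ball(c,2r))\big)\cup X_{\tilde{z}}$.

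I do not anticipate a genuine obstacle, since the proof is a constant-bookkeeping variant of Lemma~\ref{lem-neighbor}; the only point requiring care is matching the threshold $4r+\epsilon$ defining $A'_p$ against the two $2r$ covering radii so that the triangle-inequality estimate lands strictly above $\epsilon$. The choice $d(c,c_p)\le 4r+\epsilon$ in the definition of $A'_p$ is precisely what makes this margin close, and I would double-check that the complementary balls are taken with radius $2r$ rather than $r$, so that the final union genuinely covers all of $X\setminus X_{\tilde{z}}$.
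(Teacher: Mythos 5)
Your proof is correct and follows exactly the route the paper intends: the paper omits the details, stating only that Lemma~\ref{lem-neighbor2} follows by ``the similar ideas for proving Lemma~\ref{lem-neighbor}'', and your argument is precisely that adaptation, replacing the covering radius $r$ by $2r$ (justified by the post-filtration bound $d(\cdot,E)\le 2r$) and the threshold $2r+\epsilon$ by $4r+\epsilon$ in the triangle-inequality estimate. The constants match so that the bound lands strictly above $\epsilon$, and the final set-complementation step is identical to the paper's Lemma~\ref{lem-neighbor} proof.
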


\begin{lemma}
\label{lem-ball2}
%
$|A'_p\setminus X_{in}| \leq \tilde{z}$ and $|A'_p\cap X_{in}|\leq 2^{\rho\lceil\log\alpha\rceil}$, where $\alpha=8+2\frac{\epsilon}{r}$.
\end{lemma}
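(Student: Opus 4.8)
The plan is to establish the two bounds separately, mirroring the decomposition $A'_p = (A'_p\setminus X_{in}) \cup (A'_p\cap X_{in})$ already used in the proof of Lemma~\ref{lem-ball}. The first bound is immediate: every point of $A'_p\setminus X_{in}$ is by definition not an inlier, so $A'_p\setminus X_{in}\subseteq X_{out}$, and since $\tilde{z}$ is an upper bound on $|X_{out}|$ we get $|A'_p\setminus X_{in}|\leq |X_{out}|\leq\tilde{z}$. No geometry is needed for this part.

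The substance lies in the second bound, and here the filtration operation is what buys a cleaner argument than in Lemma~\ref{lem-ball}. First I would record the key consequence of filtration: because each batch is replaced by a maximal independent set of its $r$-proximity graph, any two distinct points of $E$ are at distance at least $r$ apart. In particular, the set $A'_p\cap X_{in}$, being a subset of $E$, has minimum pairwise distance at least $r$. Next I would bound its diameter by the triangle inequality: for any $c_j, c_{j'}\in A'_p\cap X_{in}$, since both lie within distance $4r+\epsilon$ of $c_p$ by the definition of $A'_p$, we have $d(c_j,c_{j'})\leq d(c_j,c_p)+d(c_p,c_{j'})\leq 2(4r+\epsilon)=8r+2\epsilon$. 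Dividing the diameter by the minimum separation $r$ shows the aspect ratio of $A'_p\cap X_{in}$ is at most $\alpha=8+2\frac{\epsilon}{r}$.

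With the aspect ratio in hand, the bound follows directly: $A'_p\cap X_{in}$ is a subset of $X_{in}$, whose metric has doubling dimension $\rho$ by Definition~\ref{def-assumption}, so Proposition~\ref{pro-doubling} immediately yields $|A'_p\cap X_{in}|\leq 2^{\rho\lceil\log\alpha\rceil}$. This is precisely where filtration pays off. In Lemma~\ref{lem-ball} the points inside a single batch could be closer than $r$, so Proposition~\ref{pro-doubling} could not be applied to $A_p\cap X_{in}$ itself; we were forced to pass to one representative point per red batch and count batches instead. After filtration that detour is unnecessary, and we apply the proposition to $A'_p\cap X_{in}$ directly.

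The main point requiring care, rather than a genuine obstacle, is tracking the constants correctly through the modified radii. Because filtration only guarantees that $X\setminus X_{\tilde{z}}$ is covered by balls of radius $2r$ (not $r$), the neighborhood threshold must be widened to $4r+\epsilon$ in the definition of $A'_p$, which in turn inflates the diameter bound to $8r+2\epsilon$ and hence the aspect ratio to $\alpha=8+2\frac{\epsilon}{r}$, rather than the $4+2\frac{\epsilon}{r}$ of Lemma~\ref{lem-ball}. I would double-check these factors of two against the radii used in Lemma~\ref{lem-neighbor2} to ensure consistency; otherwise the argument is routine.
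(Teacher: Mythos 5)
Your proposal is correct and follows exactly the route the paper sketches for this lemma (the paper omits the details, appealing to the proofs of Lemmas~\ref{lem-neighbor} and~\ref{lem-ball}): the outlier part is bounded trivially by $\tilde{z}$, and for the inlier part the filtration guarantees minimum separation $r$ within $E$, the definition of $A'_p$ gives diameter at most $2(4r+\epsilon)$, and Proposition~\ref{pro-doubling} applied to $A'_p\cap X_{in}\subseteq X_{in}$ yields the bound. In fact your write-up is slightly more careful than the paper's text, which loosely speaks of the aspect ratio of $A'_p$ rather than of $A'_p\cap X_{in}$, the set to which the doubling-dimension assumption actually applies.
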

\begin{remark}
Comparing with the size $|A_p\cap X_{in}|$ in Lemma~\ref{lem-ball}, we remove the hidden constant ``$(\max\{2, \frac{1}{1-\gamma}\}\cdot\log \frac{1}{\eta})$'' but increase the value of $\alpha$ from $4+2\frac{\epsilon}{r}$ to $8+2\frac{\epsilon}{r}$.  So, we cannot directly compare the sizes $|A_p\cap X_{in}|$ and $|A'_p\cap X_{in}|$ in general. In Section~\ref{sec-exp}, we implement the two algorithms, and investigate their experimental performances. 
\end{remark}

%

\begin{figure*}[]
	\centering
	\subfloat[\textsc{Synthetic} ($r=100$)]{\includegraphics[height=1.2in]{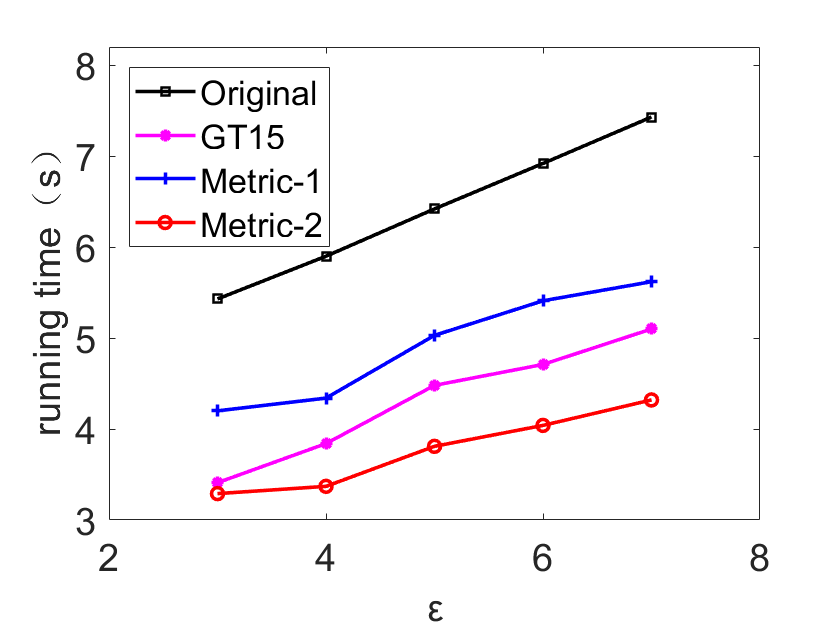}\label{fig-44}}
	\subfloat[\textsc{NeuroIPS} ($r=10$)]{\includegraphics[height=1.2in]{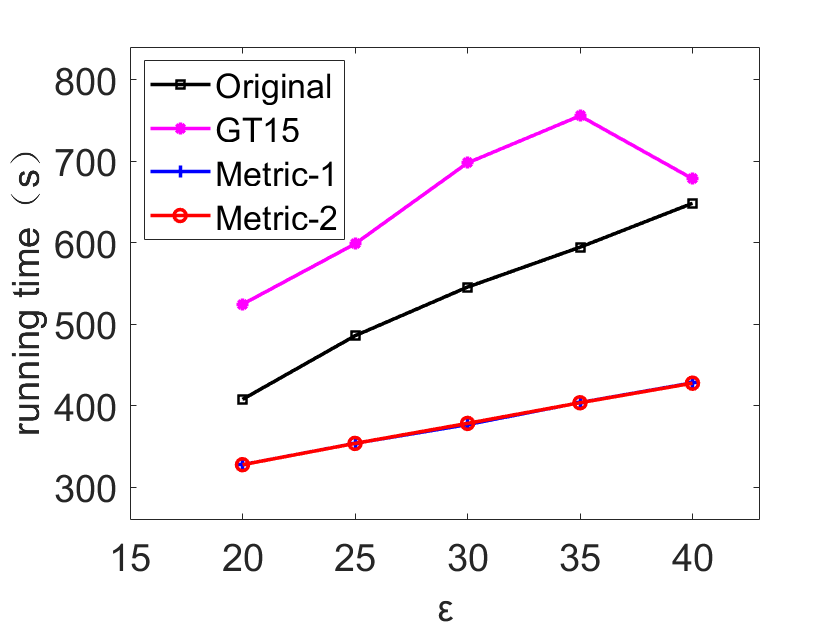}\label{fig-43}}
	\subfloat[\textsc{USPSHW} ($r=10$)]{\includegraphics[height=1.2in]{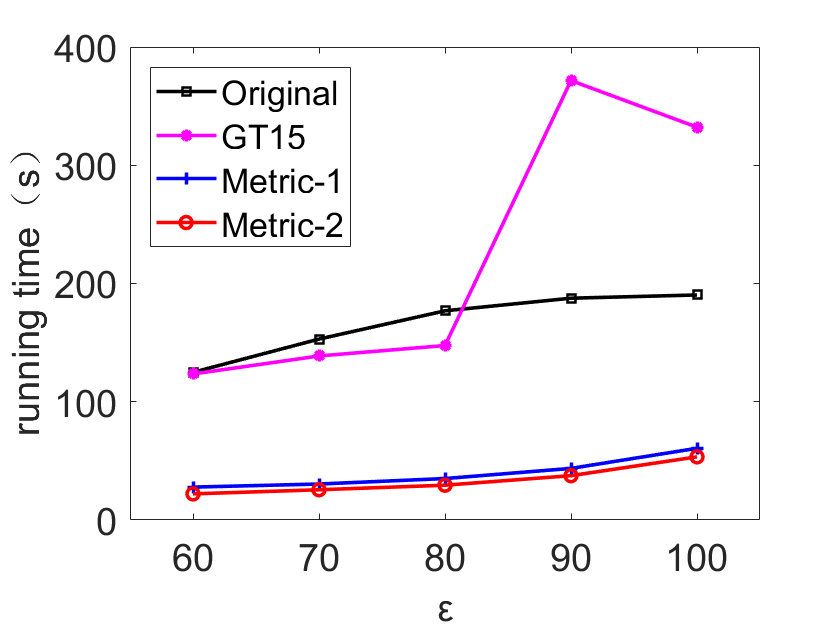} \label{fig-42}}
	\subfloat[\textsc{MINIST} ($r=15$)]{\includegraphics[height=1.2in]{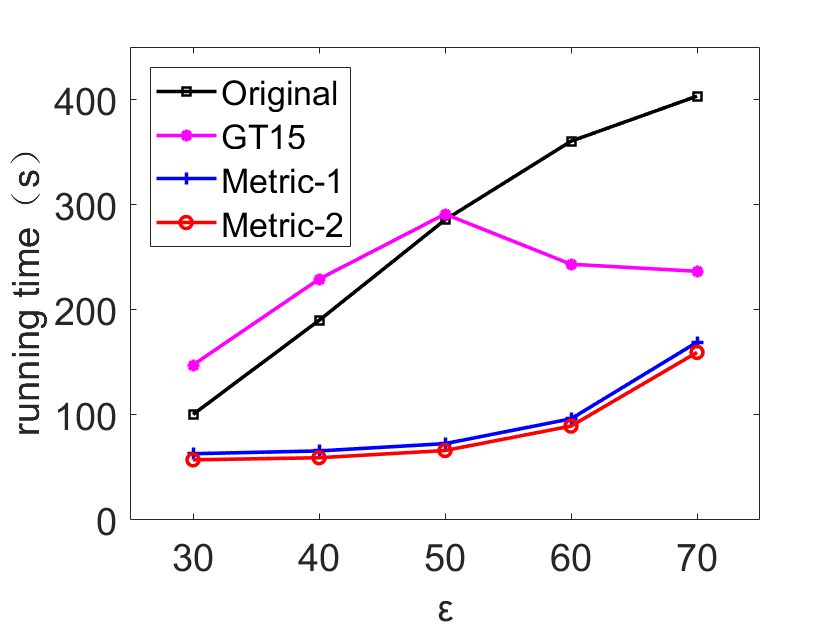}\label{fig-41}}\\
\vspace{-0.1in}
	\subfloat[\textsc{Synthetic} ($r=100$)]{\includegraphics[height=1.2in]{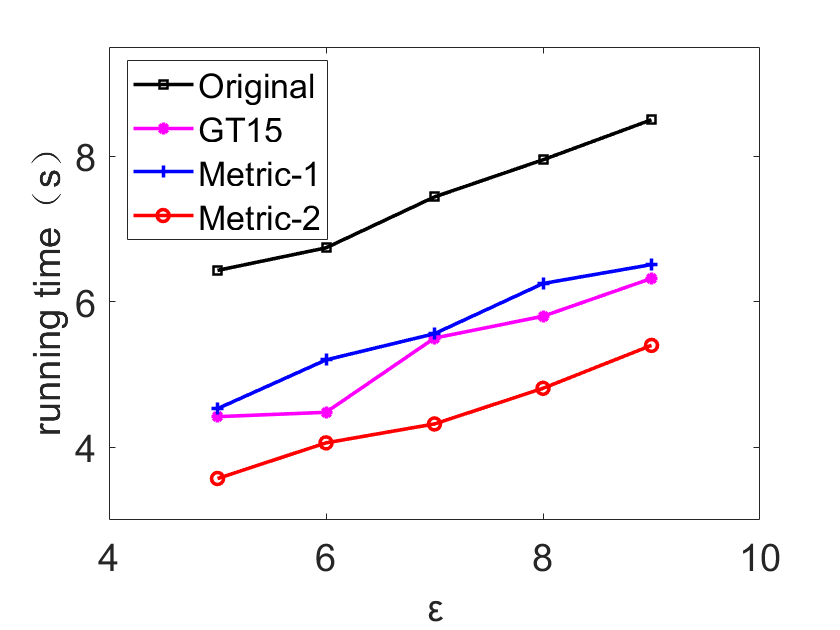}\label{fig-54}}
	\subfloat[\textsc{NeuroIPS} ($r=10$)]{\includegraphics[height=1.2in]{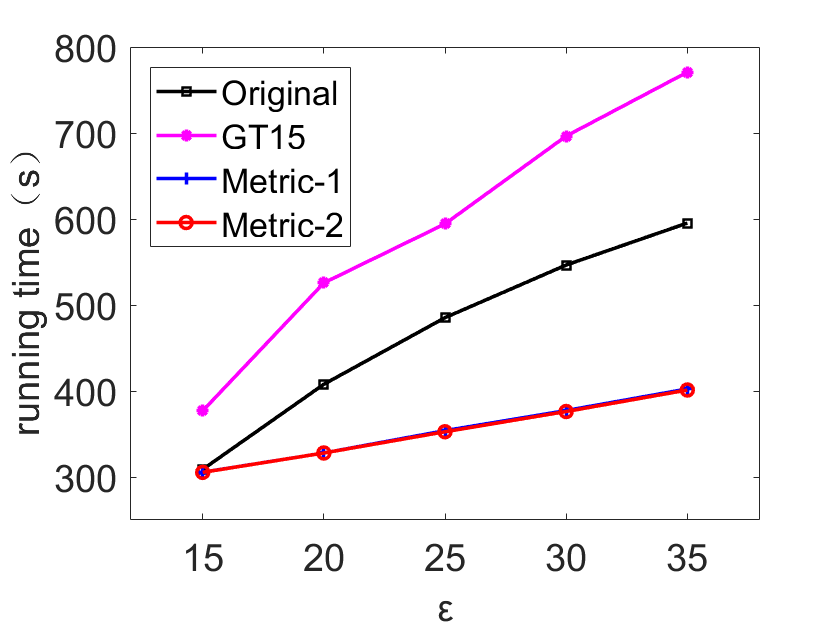}\label{fig-53}}
	\subfloat[\textsc{USPSHW} ($r=10$)]{\includegraphics[height=1.2in]{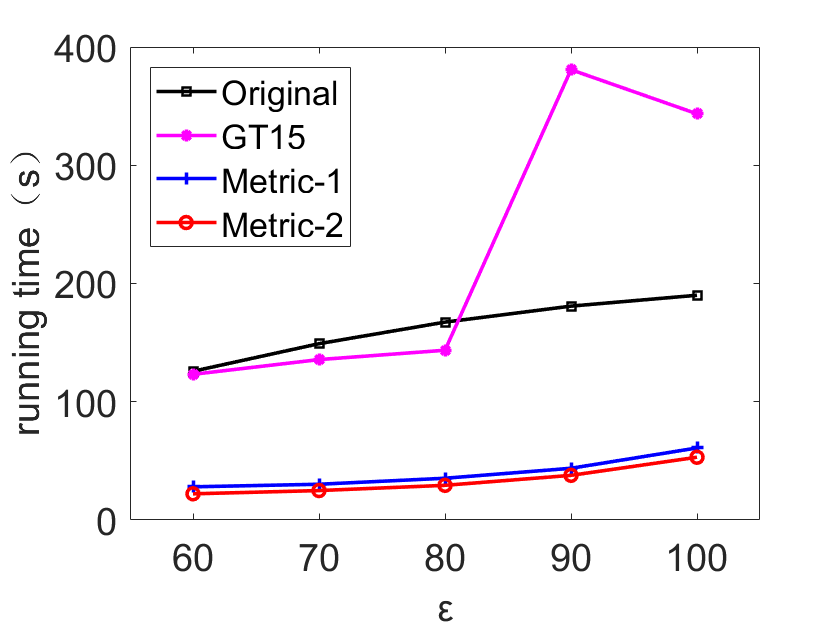} \label{fig-52}}
	\subfloat[\textsc{MINIST} ($r=15$)]{\includegraphics[height=1.2in]{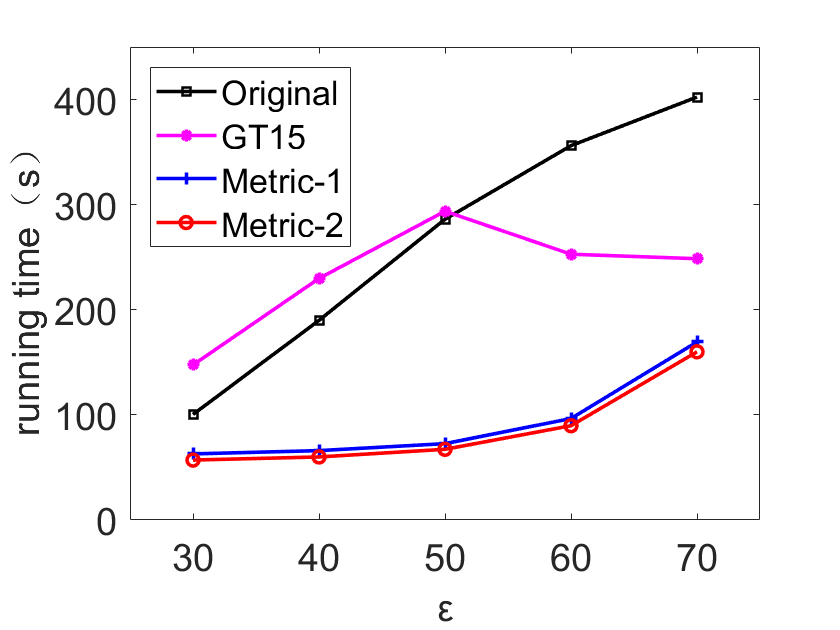}\label{fig-51}}\\
	\caption{Running times with  $MinPts=\frac{1}{1000}\cdot n$ (the first row) and $MinPts=\frac{2}{1000}\cdot n$ (the second row).}
	\label{fig-exp5}
\end{figure*}

\section{Experiments}
\label{sec-exp}
All the experimental results were obtained on a Windows $10$ workstation equipped with an Intel core $i5$-$8400$ processor and $8$GB RAM. 
We  compare the performances of the following four  DBSCAN algorithms in terms of running time: 

\begin{itemize}
	\item \textbf{\textsc{Original}}: the original DBSCAN \cite{ester1996density} that uses $R^*$-tree as the index structure. 
	\item \textbf{\textsc{GT15}}: the grid-based exact DBSCAN algorithm proposed in \cite{gan2015dbscan}. 
	\item \textbf{\textsc{Metric-1}}: our first DBSCAN algorithm proposed in Section~\ref{sec-dbscan1}.
	\item \textbf{\textsc{Metric-2}}: the alternative DBSCAN algorithm proposed in Section~\ref{sec-dbscan2}.
\end{itemize}
For the first two algorithms, we use the implementations in C++ from \cite{gan2015dbscan}. 
Our algorithms \textbf{\textsc{Metric-1}} and \textbf{\textsc{Metric-2}} are also implemented in C++.  Note that all of these four algorithms return the exact DBSCAN solution; we do not consider the approximate DBSCAN algorithms that are out of the scope of this paper. 


\begin{table}[]	
	\centering
	\caption{Datasets}	
	\label{tab:1}       
	\begin{tabular}{llll}		
		\hline\noalign{\smallskip}		
		Dataset & \#Instances & \#Attributes & Type  \\		
		\noalign{\smallskip}\hline\noalign{\smallskip}		
		\textsc{Synthetic} & $20000$ & $500$-$3000$ & Synthetic \\		
		\textsc{NeuroIPS} & $11463$ & $5811$ & Text \\	
		\textsc{USPSHW} & $7291$ & $256$ & Image	\\
		\textsc{MINIST} & $10000$ & $784$ & Image \\
		\noalign{\smallskip}\hline			
	\end{tabular}
	\label{tab-1}
\end{table}

\textbf{The datasets.} We evaluated our methods on both synthetic and real datasets where the details are shown in Table~\ref{tab-1}. We generated $6$ synthetic datasets. For each synthetic dataset, we randomly generate $n=20000$ points in $\mathbb{R}^2$, and then locate them to a higher dimensional space $\mathbb{R}^D$ through random affine transformations; the dimension $D$ ranges from $500$ to $3000$. 
\textbf{\textsc{NeuroIPS}}~\cite{perrone2017poisson} contains $n=11463$ word vectors of  the full texts of the NeuroIPS conference papers published in $1987$-$2015$. \textbf{\textsc{USPSHW}}~\cite{DBLP:journals/pami/Hull94} contains $n=7291$ $16\times16$ pixel handwritten letter images.  \textbf{\textsc{MNIST}}~\cite{lecun1998gradient} contains $n=10000$ handwritten digit images from $0$ to $9$, where each image is represented by a $784$-dimensional vector.

%
\begin{figure}[]
	\vspace{-0.1in}
	\centering
	\includegraphics[height=1.1in]{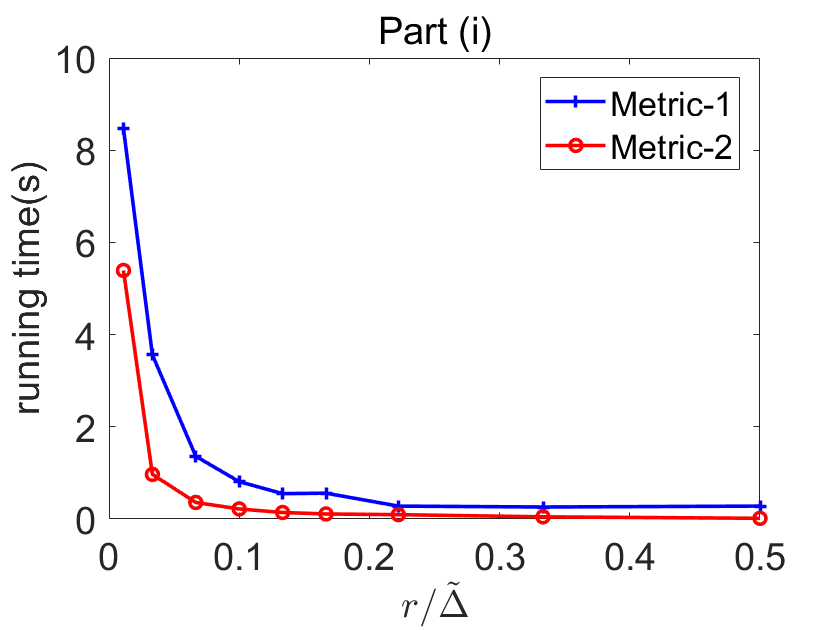}
		\includegraphics[height=1.1in]{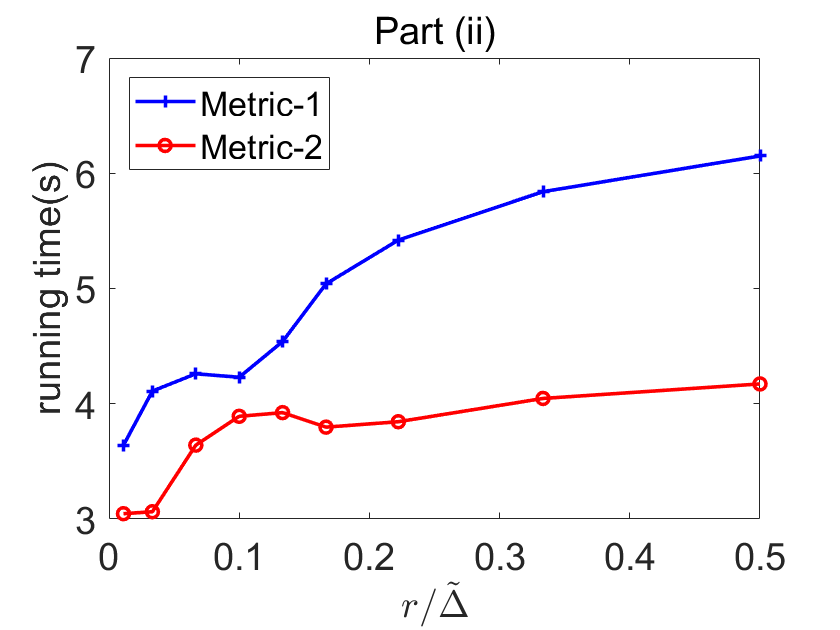}
	  \vspace{-0.1in}
	\caption{The running time in Part (\rmnum{1}) and Part (\rmnum{2}).} 
	\label{fig-exp1}
	  \vspace{-0.1in}
\end{figure}
%
%
%
%
%


%
%

\textbf{The results.} We validate the influence of the value of $r$ to the running times of \textsc{Metric-1} and \textsc{Metric-2}. We focus on the \textsc{Synthetic} datasets. To determine the value of $r$, we first estimate the diameter $\Delta$, the largest pairwise distance, of the dataset. Obviously, it takes at least quadratic time to achieve the exact value of $\Delta$; instead, we just arbitrarily select one point and pick its farthest point from the dataset, where the obtained value $\tilde{\Delta}$ is between $\Delta/2$ and $\Delta$. We set $\tilde{z}=200$ ({\em i.e.,} $1\% n$) and vary the ratio $r/\tilde{\Delta}$ in $0$-$0.5$. The running times with respect to  Part (\rmnum{1}) and Part (\rmnum{2}) (described in Section~\ref{sec-dbscan1}) are shown in Figure~\ref{fig-exp1} separately. As $r/\tilde{\Delta}$ increases, the running time of Part (\rmnum{1}) ({\em resp.,} Part~(\rmnum{2})) decreases ({\em resp.,} increases). The overall running time (of the two parts) reaches the lowest value when $r/\tilde{\Delta}$ is around $0.1$.

Further, we set the value $MinPts=\frac{1}{1000} n$ and $\frac{2}{1000}n$ for each dataset and show the running times in Figure~\ref{fig-exp5}. We can see that our \textsc{Metric-2} achieves the lowest running times on \textsc{Synthetic}; the running times of \textsc{Metric-1} and \textsc{Metric-2} are very close on the three real datasets; our both algorithms significantly outperform the two baseline algorithms in terms of running time.

\section{Future Work}
In this paper, we consider the problem of DBSCAN with low doubling dimension, and develop the $k$-center clustering based algorithms to reduce the complexity of range query. A few directions deserve to be studied in future work, such as
  other density based clustering and outlier recognition problems under the assumption of Definition~\ref{def-assumption}.

\newpage

\bibliographystyle{named}
\bibliography{dbscan}

\begin{thebibliography}{}

\bibitem[\protect\citeauthoryear{Beckmann \bgroup \em et al.\egroup
  }{1990}]{DBLP:conf/sigmod/BeckmannKSS90}
Norbert Beckmann, Hans{-}Peter Kriegel, Ralf Schneider, and Bernhard Seeger.
\newblock The r*-tree: An efficient and robust access method for points and
  rectangles.
\newblock In {\em Proceedings of the 1990 {ACM} {SIGMOD} International
  Conference on Management of Data, Atlantic City, NJ, USA, May 23-25, 1990},
  pages 322--331, 1990.

\bibitem[\protect\citeauthoryear{Belkin}{2003}]{belkin2003problems}
Mikhail Belkin.
\newblock {\em Problems of learning on manifolds}.
\newblock The University of Chicago, 2003.

\bibitem[\protect\citeauthoryear{Blum \bgroup \em et al.\egroup
  }{1973}]{blum1973time}
Manuel Blum, Robert~W. Floyd, Vaughan Pratt, Ronald~L. Rivest, and Robert~E.
  Tarjan.
\newblock Time bounds for selection.
\newblock {\em Journal of Computer and System Sciences}, 7(4):448--461, 1973.

\bibitem[\protect\citeauthoryear{Chakrabarty \bgroup \em et al.\egroup
  }{2016}]{DBLP:conf/icalp/ChakrabartyGK16}
Deeparnab Chakrabarty, Prachi Goyal, and Ravishankar Krishnaswamy.
\newblock The non-uniform k-center problem.
\newblock In {\em 43rd International Colloquium on Automata, Languages, and
  Programming, {ICALP} 2016, July 11-15, 2016, Rome, Italy}, pages 67:1--67:15,
  2016.

\bibitem[\protect\citeauthoryear{Charikar \bgroup \em et al.\egroup
  }{2001}]{charikar2001algorithms}
Moses Charikar, Samir Khuller, David~M Mount, and Giri Narasimhan.
\newblock Algorithms for facility location problems with outliers.
\newblock In {\em Proceedings of the twelfth annual ACM-SIAM symposium on
  Discrete algorithms}, pages 642--651. Society for Industrial and Applied
  Mathematics, 2001.

\bibitem[\protect\citeauthoryear{Chen \bgroup \em et al.\egroup
  }{2005}]{DBLP:journals/ijcga/ChenSX05}
Danny~Z. Chen, Michiel H.~M. Smid, and Bin Xu.
\newblock Geometric algorithms for density-based data clustering.
\newblock {\em Int. J. Comput. Geometry Appl.}, 15(3):239--260, 2005.

\bibitem[\protect\citeauthoryear{de Berg \bgroup \em et al.\egroup
  }{2017}]{DBLP:conf/isaac/BergGR17}
Mark de~Berg, Ade Gunawan, and Marcel Roeloffzen.
\newblock Faster dbscan and hdbscan in low-dimensional euclidean spaces.
\newblock In {\em 28th International Symposium on Algorithms and Computation,
  {ISAAC} 2017, December 9-12, 2017, Phuket, Thailand}, pages 25:1--25:13,
  2017.

\bibitem[\protect\citeauthoryear{Ding \bgroup \em et al.\egroup }{2019}]{DYW}
Hu~Ding, Haikuo Yu, and Zixiu Wang.
\newblock Greedy strategy works for k-center clustering with outliers and
  coreset construction.
\newblock In {\em 27th Annual European Symposium on Algorithms, {ESA} 2019},
  pages 27:1--27:16, 2019.

\bibitem[\protect\citeauthoryear{Ester \bgroup \em et al.\egroup
  }{1996}]{ester1996density}
Martin Ester, Hans-Peter Kriegel, J{\"o}rg Sander, and Xiaowei Xu.
\newblock A density-based algorithm for discovering clusters in large spatial
  databases with noise.
\newblock In {\em Proceedings of the ACM SIGKDD International Conference on
  Knowledge Discovery and Data Mining (KDD)}, pages 226--231, 1996.

\bibitem[\protect\citeauthoryear{Gan and Tao}{2015}]{gan2015dbscan}
Junhao Gan and Yufei Tao.
\newblock Dbscan revisited: mis-claim, un-fixability, and approximation.
\newblock In {\em Proceedings of the 2015 ACM SIGMOD International Conference
  on Management of Data}, pages 519--530. ACM, 2015.

\bibitem[\protect\citeauthoryear{Gonzalez}{1985}]{gonzalez1985clustering}
Teofilo~F Gonzalez.
\newblock Clustering to minimize the maximum intercluster distance.
\newblock {\em Theoretical Computer Science}, 38:293--306, 1985.

\bibitem[\protect\citeauthoryear{Gunawan}{2013}]{gunawan2013faster}
Ade Gunawan.
\newblock A faster algorithm for dbscan.
\newblock {\em Master's thesis. Eindhoven University of Technology, the
  Netherlands}, 2013.

\bibitem[\protect\citeauthoryear{Hull}{1994}]{DBLP:journals/pami/Hull94}
Jonathan~J. Hull.
\newblock A database for handwritten text recognition research.
\newblock {\em {IEEE} Trans. Pattern Anal. Mach. Intell.}, 16(5):550--554,
  1994.

\bibitem[\protect\citeauthoryear{Jang and Jiang}{2019}]{DBLP:conf/icml/JangJ19}
Jennifer Jang and Heinrich Jiang.
\newblock {DBSCAN++:} towards fast and scalable density clustering.
\newblock In {\em Proceedings of the 36th International Conference on Machine
  Learning, {ICML} 2019, 9-15 June 2019, Long Beach, California, {USA}}, pages
  3019--3029, 2019.

\bibitem[\protect\citeauthoryear{Karger and Ruhl}{2002}]{karger2002finding}
David~R Karger and Matthias Ruhl.
\newblock Finding nearest neighbors in growth-restricted metrics.
\newblock In {\em Proceedings of the thiry-fourth annual ACM symposium on
  Theory of computing}, pages 741--750. ACM, 2002.

\bibitem[\protect\citeauthoryear{Krauthgamer and
  Lee}{2004}]{krauthgamer2004navigating}
Robert Krauthgamer and James~R Lee.
\newblock Navigating nets: simple algorithms for proximity search.
\newblock In {\em Proceedings of the fifteenth annual ACM-SIAM symposium on
  Discrete algorithms}, pages 798--807. Society for Industrial and Applied
  Mathematics, 2004.

\bibitem[\protect\citeauthoryear{LeCun \bgroup \em et al.\egroup
  }{1998}]{lecun1998gradient}
Yann LeCun, L{\'e}on Bottou, Yoshua Bengio, and Patrick Haffner.
\newblock Gradient-based learning applied to document recognition.
\newblock {\em Proceedings of the IEEE}, 86(11):2278--2324, 1998.

\bibitem[\protect\citeauthoryear{Lulli \bgroup \em et al.\egroup
  }{2016}]{lulli2016ng}
Alessandro Lulli, Matteo Dell'Amico, Pietro Michiardi, and Laura Ricci.
\newblock Ng-dbscan: scalable density-based clustering for arbitrary data.
\newblock {\em Proceedings of the VLDB Endowment}, 10(3):157--168, 2016.

\bibitem[\protect\citeauthoryear{Perrone \bgroup \em et al.\egroup
  }{2017}]{perrone2017poisson}
Valerio Perrone, Paul~A Jenkins, Dario Spano, and Yee~Whye Teh.
\newblock Poisson random fields for dynamic feature models.
\newblock {\em The Journal of Machine Learning Research}, 18(1):4626--4670,
  2017.

\bibitem[\protect\citeauthoryear{Schubert \bgroup \em et al.\egroup
  }{2017}]{schubert2017dbscan}
Erich Schubert, J{\"o}rg Sander, Martin Ester, Hans~Peter Kriegel, and Xiaowei
  Xu.
\newblock Dbscan revisited, revisited: why and how you should (still) use
  dbscan.
\newblock {\em ACM Transactions on Database Systems (TODS)}, 42(3):19, 2017.

\bibitem[\protect\citeauthoryear{Song and Lee}{2018}]{song2018rp}
Hwanjun Song and Jae-Gil Lee.
\newblock Rp-dbscan: A superfast parallel dbscan algorithm based on random
  partitioning.
\newblock In {\em Proceedings of the 2018 International Conference on
  Management of Data}, pages 1173--1187. ACM, 2018.

\bibitem[\protect\citeauthoryear{Talwar}{2004}]{talwar2004bypassing}
Kunal Talwar.
\newblock Bypassing the embedding: algorithms for low dimensional metrics.
\newblock In {\em Proceedings of the thirty-sixth annual ACM symposium on
  Theory of computing}, pages 281--290, 2004.

\bibitem[\protect\citeauthoryear{Tan \bgroup \em et al.\egroup
  }{2006}]{tan2006introduction}
Pang-Ning Tan, Michael Steinbach, and Vipin Kumar.
\newblock {\em Introduction to Data Mining}.
\newblock 2006.

\bibitem[\protect\citeauthoryear{Yang \bgroup \em et al.\egroup
  }{2019}]{yang2019dbscan}
Keyu Yang, Yunjun Gao, Rui Ma, Lu~Chen, Sai Wu, and Gang Chen.
\newblock Dbscan-ms: Distributed density-based clustering in metric spaces.
\newblock In {\em 2019 IEEE 35th International Conference on Data Engineering
  (ICDE)}, pages 1346--1357. IEEE, 2019.

\end{thebibliography}

\end{document}